\documentclass[a4paper,12pt]{amsart}

\usepackage[utf8]{inputenc}
\usepackage[american]{babel}
\usepackage[hmarginratio={1:1}]{geometry}
\usepackage[shortlabels]{enumitem}
\usepackage{amsmath,amssymb}
\usepackage{amsthm}
\usepackage{comment}
\usepackage[hidelinks]{hyperref}
\usepackage{cleveref}

\newcommand{\ip}[2]{\langle#1,#2\rangle}
\newcommand{\abs}[1]{\lvert#1\rvert}
\newcommand{\norm}[1]{\lVert#1\rVert}

\newcommand{\id}{\mathrm{id}}
\newcommand{\OM}{\mathrm{OM}}

\renewcommand{\phi}{\varphi}

\newcommand{\dom}{\operatorname{dom}}

\newcommand{\IC}{\mathbb C}
\newcommand{\IR}{\mathbb R}

\theoremstyle{plain}
\newtheorem{proposition}{Proposition}[section]
\newtheorem{corollary}[proposition]{Corollary}
\newtheorem{lemma}[proposition]{Lemma}
\newtheorem{theorem}[proposition]{Theorem}

\theoremstyle{remark}
\newtheorem{remark}[proposition]{Remark}

\theoremstyle{definition}

\newtheorem{example}[proposition]{Example}

\title{The KMS and GNS Spectral Gap of Quantum Markov Semigroups}
\author{Melchior Wirth}
\address{Faculty of Mathematics and Computer Science, Leipzig University, Neues Augusteum, Augustusplatz 10, 04109 Leipzig, Germany}
\email{melchior.wirth@uni-leipzig.de}

\date{}

\begin{document}

\begin{abstract}
    We establish a relation between the exponential decay rates of quantum Markov semigroups with respect to different inner products. More precisely, it was conjectured by Fagnola, Poletti, Sasso and Umanità that for a Gaussian quantum Markov semigroup, the exponential decay rate with respect to the KMS inner product is bounded below by the exponential decay rate for the GNS inner product. We show that this is indeed the case and not limited to Gaussian quantum Markov semigroups, but holds for quantum Markov semigroups with a faithful normal invariant state on arbitrary von Neumann algebras. Additionally, the KMS inner product can be replaced by a whole class of inner products induced by operator monotone functions.
\end{abstract}

\maketitle

\section{Introduction}

Since their inception as models of the time evolution of certain open quantum systems in the seventies \cite{Lin76,GKS76,AL87}, quantum Markov semigroups and quantum Dirichlet forms have become a vital tool not only in mathematical physics, but also in adjacent fields of pure mathematics such as the theory of von Neumann algebras \cite{Pet09a,Pet09b,CS15,CS17}, noncommutative harmonic analysis \cite{MLX06,JM10,JM12,HJ24}, noncommutative geometry \cite{DR89,CS03,CS09} and noncommutative probability \cite{HP84,VS84,Bia03,CFK14}. We refer to \cite{Fag99,Cip08,Cip23} for some surveys.

Due to interactions with the environment, open quantum systems can show dissipative behavior that is absent in closed quantum systems. As a consequence, questions regarding the long-time behavior of quantum Markov semigroups have been subject to sustained research interest. Furthermore, questions of this kind are also of purely mathematical interest. For example, the existence of quantum Markov semigroups with certain dissipative features is tied to structural properties of the underlying von Neumann algebra -- one prominent example is Connes' spectral gap characterization of property Gamma for finite von Neumann algebras \cite{Con76}.

Mathematically, a quantum Markov semigroup is a family of unital completely positive maps on a von Neumann algebra that form a one-paramater semigroup with some continuity property. As such, multiples of the identity are fixed under a quantum Markov semigroup. Some typical questions regarding the long-time behavior ask whether these are the only fixed points, if the quantum Markov semigroup applied to an arbitrary observable converges to a fixed point as $t\to\infty$, and if so, how fast this convergence is.

Naturally, there are various variants of the last question depending on the measure of deviation between operators or states. In the classical case, if a Markov semigroup has an invariant probability measure and is reversible with respect to this measure, then exponential convergence in $L^2$ norm on the orthogonal complement of the constant functions is equivalent to the existence of a spectral gap of the $L^2$ generator, and the size of the spectral gap determines the exponential convergence rate.

A unique feature of the noncommutative setting is that for a given (faithful normal) invariant state, there is not only one inner product induced by the state, but a whole family of inner products. The most relevant examples in applications are the GNS (Gelfand--Naimark--Segal) and KMS (Kubo--Martin--Schwinger) inner product, but other examples such as the BKM (Bogoliubov--Kubo--Mori) have also been used. In fact, certain structural properties are probably best understood by studying the family of inner products parametrized by all operator monotone functions instead of only individual members.

Given this zoo of inner products, it is a natural question to ask how the exponential convergence rates for different inner products are related. We address this question in this article by proving a conjecture made by Fagnola, Poletti, Sasso and Umanità \cite{FPSU25} in the context of Gaussian quantum Markov semigroups.

Gaussian quantum Markov semigroups are quantum Markov semigroups on the bounded operators on Bosonic Fock space whose predual leaves the set of Gaussian states invariant. Gaussian quantum Markov semigroups are not only of high physical relevance, for example in quantum optics and continuous-variable quantum information, but also very interesting as mathematical models because many problems can be reduced to explicit finite-dimensional computations (in the finite-mode case). As such, Gaussian quantum Markov semigroup are a fascinating experimental ground for discovering general structural features of quantum Markov semigroups and have received intense research attention in recent years \cite{AFP21,AFP22a,AFP22b,Pol22,FP24,FL25,GP26}.

In the context of Gaussian quantum Markov semigroup, Fagnola, Poletti, Sasso and Umanità formulated the following conjecture \cite[Section 7]{FPSU25}: If a quantum Markov semigroup exhibits exponential decay with respect to the GNS inner product (on the orthogonal complement of the identity), it also exhibits exponential decay with respect to the KMS inner product and the exponential decay rate for the KMS inner product is at least as big as the one for the GNS inner product. They proved the conjecture for a class of Gaussian QMS in the single-mode case. Later, Fagnola and Li \cite{FL25} showed that the conjecture always holds in the single-mode case as long as the invariant Gaussian state has a density operator that is diagonal in the standard basis, and Li \cite{Li25} made progress on this conjecture by showing that exponential decay with respect to the GNS inner product implies exponential decay with respect to the KMS inner product for Gaussian quantum Markov semigroups.

In the present article, we solve the conjecture completely. In fact, we show that this relation between exponential decay rates with respect to the KMS and GNS inner product is not limited to Gaussian quantum Markov semigroups, but a general feature of quantum Markov semigroups with a faithful normal invariant state. Our proof relies on a by now classical circle of ideas, which has different incarnations in the form of the operator Jensen inequality \cite{HP82}, the transformer inequality for operator means \cite{KA80}, exact interpolation spaces between Hilbert spaces \cite{Don67} and in a special case also Lieb's concavity theorem \cite{Lie73}.

This article is structured as follows. In \Cref{sec:prelim} we collect some preliminaries on positive self-adjoint operators and quadratic forms, operator monotone functions and modular theory. In particular, we introduce for a given normal faithful state $\phi$ on  a von Neumann algebra a class of inner products $\langle\,\cdot\,,\cdot\,\rangle_f$ parametrized by the operator monotone functions $f\colon\IR_+\to\IR_+$.

In \Cref{sec:comparison} we prove an interpolation-type result (\Cref{thm:exact_interpolation}) for these inner products, which states that if a $\ast$-preserving map is contractive for GNS inner product, it is contractive for the $f$-inner product for all operator monotone $f$. As a consequence, a quantum Markov semigroup with invariant state $\phi$ extends to a strongly continuous contraction semigroup on the completion with respect to the $f$-inner product (\Cref{prop:C_0_ext}). Moreover, as another corollary of \Cref{thm:exact_interpolation} we obtain the comparison result between the exponential decay rates for the GNS and $f$-inner product that resolves the conjecture of Fagnola, Poletti, Sasso and Umanità (\Cref{cor:comparison_spectral_gap}).

\subsection*{Acknowledgements} The author is grateful to Franco Fagnola and Damiano Poletti for fruitful conversations on Gaussian quantum Markov semigroups and their spectral theory. This work was funded by the Deutsche Forschungsgemeinschaft (DFG, German Research Foundation) -- Project number 565405322.

\section{Preliminaries}\label{sec:prelim}

In this section we introduce some preliminary material regarding self-adjoint operators and quadratic forms, operator monotone functions and modular theory. All results here are either well-known to experts or follow easily from standard material.

\subsection{Positive self-adjoint operators and quadratic forms}

Let $H$ be a Hilbert space. A \emph{quadratic form} on $H$ is a map $Q\colon H\to [0,\infty]$ such that $Q(\lambda\xi)=\abs{\lambda}^2Q(\xi)$ and $Q(\xi+\eta)+Q(\xi-\eta)=2Q(\xi)+2Q(\eta)$ for all $\xi,\eta\in H$ and $\lambda\in \IC$. The \emph{domain} of $Q$ is given by $\dom(Q)=\{\xi\in H\mid Q(\xi)<\infty\}$. The form $Q$ is called \emph{densely defined} if $\dom(Q)$ is dense in $H$.

A quadratic form $Q$ is called \emph{closed} if it is lower semicontinuous and \emph{closable} if its restriction to $\dom(Q)$ is lower semicontinuous. If $Q$ is closable, then its lower semicontinuous envelope $\overline Q$ coincides with $Q$ on $\dom(Q)$ and is called the \emph{closure} of $Q$. In other words,
\begin{equation*}
    \overline Q(\xi)=\liminf_{n\to\xi}Q(\eta),\quad \xi\in H.
\end{equation*}

If $A$ is a positive self-adjoint operator in $H$ with spectral measure $E$, then 
\begin{equation*}
    Q_A\colon H\to [0,\infty],\,Q_A(\xi)=\int_{\IR_+}\lambda \,d\norm{E(\lambda)\xi}^2=\begin{cases}\norm{A^{1/2}\xi}^2&\text{if }\xi\in \dom(A^{1/2}),\\\infty&\text{otherwise}\end{cases}
\end{equation*}
is a densely defined closed quadratic form. Vice versa, if $Q$ is a densely defined closed quadratic form on $H$, then there exists a unique positive self-adjoint operator $A$ in $H$ such that $Q=Q_A$.

If $Q$ is a closed quadratic form and $\lambda>0$, then the \emph{Moreau regularization} $Q^{(\lambda)}$ is defined as
\begin{equation*}
    Q^{(\lambda)}\colon H\to [0,\infty),\,Q^{(\lambda)}(\xi)=\inf_{\eta\in H}(Q(\eta)+\frac 1 \lambda\norm{\xi-\eta}^2).
\end{equation*}
Lower semicontinuity immediately gives $Q^{(\lambda)}(\xi)\nearrow Q(\xi)$ for all $\xi\in H$.

The Moreau regularization is related to bounded approximations of positive self-adjoint operators as follows (see \cite[Section 1.~(i)]{Mos94} for example).

\begin{lemma}\label{lem:Moreau_approx}
    If $A$ is a positive self-adjoint operator in $H$ and $\lambda>0$, then $Q_A^{(\lambda)}=Q_{A(1+\lambda A)^{-1}}$.
\end{lemma}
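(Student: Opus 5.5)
The plan is to reduce the claim to a pointwise minimization over the spectrum of $A$ via the spectral theorem. Let $E$ be the spectral measure of $A$. For $\xi\in H$ and $\lambda>0$ we write
\begin{equation*}
    Q_A^{(\lambda)}(\xi)=\inf_{\eta\in\dom(A^{1/2})}\Bigl(\norm{A^{1/2}\eta}^2+\frac1\lambda\norm{\xi-\eta}^2\Bigr).
\end{equation*}
The infimum may be restricted to $\dom(A^{1/2})$ because $Q_A=\infty$ elsewhere. The functional to be minimized is strictly convex, coercive and lower semicontinuous, so a minimizer $\eta_\ast$ exists. We expect it to be $\eta_\ast=(1+\lambda A)^{-1}\xi$, and we will verify this directly rather than appeal to abstract convex analysis.

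First, $(1+\lambda A)^{-1}$ is a bounded operator with range $\dom(A)\subset\dom(A^{1/2})$, so $\eta_\ast$ is admissible. Next, for an arbitrary $\eta\in\dom(A^{1/2})$ we write $\eta=\eta_\ast+\zeta$ with $\zeta\in\dom(A^{1/2})$ and expand the functional. The cross term is
\begin{equation*}
    2\operatorname{Re}\Bigl(\ip{A^{1/2}\eta_\ast}{A^{1/2}\zeta}-\frac1\lambda\ip{\xi-\eta_\ast}{\zeta}\Bigr).
\end{equation*}
Since $\eta_\ast\in\dom(A)$, we have $\ip{A^{1/2}\eta_\ast}{A^{1/2}\zeta}=\ip{A\eta_\ast}{\zeta}$. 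Moreover $\xi-\eta_\ast=\lambda A\eta_\ast$, so the cross term vanishes. The remaining terms are $\norm{A^{1/2}\zeta}^2+\frac1\lambda\norm{\zeta}^2\ge0$. Hence $\eta_\ast$ is the minimizer, and the minimum value is
\begin{equation*}
    Q_A^{(\lambda)}(\xi)=\norm{A^{1/2}\eta_\ast}^2+\frac1\lambda\norm{\xi-\eta_\ast}^2.
\end{equation*}

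The last step is to evaluate this value by functional calculus. In terms of the spectral measure it equals
\begin{equation*}
    \int_{\IR_+}\Bigl(\frac{t}{(1+\lambda t)^2}+\frac1\lambda\Bigl(\frac{\lambda t}{1+\lambda t}\Bigr)^2\Bigr)\,d\norm{E(t)\xi}^2=\int_{\IR_+}\frac{t}{1+\lambda t}\,d\norm{E(t)\xi}^2.
\end{equation*}
The right-hand side is $Q_{A(1+\lambda A)^{-1}}(\xi)$. The operator $A(1+\lambda A)^{-1}$ is bounded, positive and self-adjoint, and it is given by the function $t\mapsto t/(1+\lambda t)$ of $A$. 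The pointwise identity is elementary: the integrand is $\frac{t+\lambda t^2}{(1+\lambda t)^2}=\frac{t}{1+\lambda t}$.

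The only point requiring care is the identity $\ip{A^{1/2}\eta_\ast}{A^{1/2}\zeta}=\ip{A\eta_\ast}{\zeta}$ for $\zeta$ in the form domain only. This follows from self-adjointness of $A^{1/2}$ together with $A^{1/2}\eta_\ast\in\dom(A^{1/2})$. With that in hand, the remaining steps are routine.
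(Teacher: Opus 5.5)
Your argument is correct and complete. The decomposition $\eta=\eta_\ast+\zeta$ with $\eta_\ast=(1+\lambda A)^{-1}\xi$ makes the cross term vanish, because $\xi-\eta_\ast=\lambda A\eta_\ast$ and $\ip{A^{1/2}\eta_\ast}{A^{1/2}\zeta}=\ip{A\eta_\ast}{\zeta}$ for $\eta_\ast\in\dom(A)$. The minimal value then integrates to $\int t(1+\lambda t)^{-1}\,d\norm{E(t)\xi}^2=Q_{A(1+\lambda A)^{-1}}(\xi)$.

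The paper gives no proof of this lemma. It only cites Mosco \cite[Section 1.~(i)]{Mos94}, where the identity is part of the general theory of Moreau--Yosida approximations of closed forms. So you are not taking a different route; you are supplying a self-contained proof of the cited fact.

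Your version also identifies the minimizer explicitly as the resolvent $(1+\lambda A)^{-1}\xi$, i.e.\ the proximal point. Since the leftover term $\norm{A^{1/2}\zeta}^2+\frac1\lambda\norm{\zeta}^2$ is strictly positive for $\zeta\neq0$, the same identity also gives uniqueness. The preliminary appeal to strict convexity, coercivity and lower semicontinuity to get existence is therefore unnecessary and can be dropped. The citation buys brevity; your proof buys self-containment in a few lines, using nothing beyond the spectral theorem.
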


\subsection{Operator monotone functions}

A function $f\colon \IR_+\to \IR_+$ is called \emph{operator monotone} if $f(A)\leq f(B)$ for all positive bounded operators $A$, $B$ on an arbitrary Hilbert space that satisfy $A\leq B$. We write $\OM$ for the set of all operator monotone functions from $\IR_+$ to itself and $\OM_1$ for the subset of all $f\in\OM$ with $f(1)=1$.

Operator monotone functions have been characterized by Löwner \cite{Low34} (see also \cite{Sim19} for a comprehensive overview with a variety of proofs). To state his result, let $t\geq 0$ and define 
\begin{equation*}
    h(t,\lambda)=\frac{(1+\lambda)t}{t+\lambda}
\end{equation*}
for $\lambda\in (0,\infty)$ and $h(t,0)=1$, $h(t,\infty)=t$. A function $f\colon\IR_+\to\IR_+$ belongs to $\OM$ if and only if there exists a finite Borel measure $m$ on $[0,\infty]$ such that
\begin{equation*}
    f(t)=\int_{[0,\infty]}h(t,\lambda)\,dm(\lambda),\quad t\geq 0.
\end{equation*}
Moreover, $f\in \OM_1$ if and only if $m$ is a probability measure.

Since we will deal with modular operators, which are typically unbounded, we need an extension of the Löwner order and operator monotone functions to unbounded operators. If $H$ is a Hilbert space and $A$,$B$ are positive self-adjoint operators in $H$, one defines $A\leq B$ by $Q_A(\xi)\leq Q_B(\xi)$ for all $\xi\in H$.

If $A$ is bounded, then $Q_A(\xi)=\langle \xi,A\xi\rangle$ for $\xi\in H$. Thus this notation is consistent with the usual Löwner order in the bounded case. In fact, the order for possibly unbounded positive self-adjoint operators can always be reduced to the bounded case in the following sense.

\begin{lemma}\label{lem:Lowner_bdd}
    Let $H$ be a Hilbert space. If $A,B$ are positive self-adjoint operators on $H$, then $A\leq B$ if and only if $A(1+\lambda A)^{-1}\leq B(1+\lambda B)^{-1}$ for all $\lambda>0$.
\end{lemma}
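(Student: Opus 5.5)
The plan is to pass between the two orders through the Moreau regularizations of the quadratic forms, using \Cref{lem:Moreau_approx} to identify $Q_A^{(\lambda)}=Q_{A_\lambda}$, where $A_\lambda=A(1+\lambda A)^{-1}$ and likewise $B_\lambda=B(1+\lambda B)^{-1}$. These are bounded positive operators, so $Q_{A_\lambda}(\xi)=\langle\xi,A_\lambda\xi\rangle$, and the claimed inequality $A_\lambda\le B_\lambda$ is simply the usual Löwner order, which agrees with the form order as noted before the statement.

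\emph{Forward direction.} Assume $A\le B$, that is, $Q_A\le Q_B$ pointwise on $H$. The Moreau regularization is monotone in the form: for every $\eta$ we have $Q_A(\eta)+\frac1\lambda\norm{\xi-\eta}^2\le Q_B(\eta)+\frac1\lambda\norm{\xi-\eta}^2$. Taking the infimum over $\eta\in H$ gives $Q_A^{(\lambda)}(\xi)\le Q_B^{(\lambda)}(\xi)$ for all $\xi$. By \Cref{lem:Moreau_approx} this says $\langle\xi,A_\lambda\xi\rangle\le\langle\xi,B_\lambda\xi\rangle$, hence $A_\lambda\le B_\lambda$ for every $\lambda>0$.

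\emph{Converse direction.} Assume $A_\lambda\le B_\lambda$ for all $\lambda>0$. Then $Q_A^{(\lambda)}\le Q_B^{(\lambda)}$ by \Cref{lem:Moreau_approx}. Since $Q_A$ and $Q_B$ are closed, the monotone convergence $Q^{(\lambda)}(\xi)\nearrow Q(\xi)$ as $\lambda\searrow0$ stated in the preliminaries applies. Letting $\lambda\to0$ yields $Q_A(\xi)\le Q_B(\xi)$ for all $\xi\in H$, i.e.\ $A\le B$.

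No step is a serious obstacle; the only point to check is that the identification of \Cref{lem:Moreau_approx} and the convergence $Q^{(\lambda)}\nearrow Q$ hold with values in $[0,\infty]$, including vectors outside $\dom(Q)$. If one wanted to verify the convergence directly instead of citing it, it follows from spectral calculus: $Q_{A_\lambda}(\xi)=\int\frac{t}{1+\lambda t}\,d\norm{E(t)\xi}^2$, and this increases to $\int t\,d\norm{E(t)\xi}^2$ by monotone convergence, the limit being $+\infty$ exactly when $\xi\notin\dom(A^{1/2})$.
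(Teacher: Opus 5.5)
Your proposal is correct and follows essentially the same route as the paper: both directions go through the identification $Q_A^{(\lambda)}=Q_{A(1+\lambda A)^{-1}}$ from \Cref{lem:Moreau_approx}, using monotonicity of the infimum for the forward implication and the monotone convergence $Q^{(\lambda)}\nearrow Q$ as $\lambda\searrow 0$ (from lower semicontinuity) for the converse. Your optional spectral-calculus check of that convergence, including the case $\xi\notin\dom(A^{1/2})$, is a valid supplement but not needed beyond what the paper cites.
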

\begin{proof}
    We use that the quadratic form of $A(1+\lambda A)^{-1}$ has an explicit representation as Moreau regularization of $Q_A$ (see \Cref{lem:Moreau_approx}), namely,
    \begin{equation*}
        Q_{A(1+\lambda A)^{-1}}(\xi)=\inf_{\eta\in H}(Q_A(\eta)+\frac 1 \lambda\norm{\xi-\eta}^2),\quad \xi\in H.
    \end{equation*}
    An analogous identity holds for $Q_{B(1+\lambda B)^{-1}}$. From this identity it is clear that if $A\leq B$, then $A(1+\lambda A)^{-1}\leq B(1+\lambda B)^{-1}$ for all $\lambda>0$.
    
    Since $Q_A$ and $Q_B$ are lower continuous, we have $Q_{A(1+\lambda A)^{-1}}(\xi)\nearrow Q_A(\xi)$ and $Q_{B(1+\lambda B)^{-1}}(\xi)\nearrow Q_B(\xi)$ for all $\xi\in H$. Hence, if $A(1+\lambda A)^{-1}\leq B(1+\lambda B)^{-1}$ for all $\lambda>0$, then $A\leq B$.
\end{proof}

\begin{corollary}
    Let $H$ be a Hilbert space and $A,B$ positive self-adjoint operators on $H$. If $A\leq B$ and $f\in\OM$, then $f(A)\leq f(B)$.
\end{corollary}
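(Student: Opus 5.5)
The plan is to use the Löwner representation to reduce the statement to the functions $t\mapsto h(t,\lambda)$, and to handle these with \Cref{lem:Lowner_bdd} and the resolvent formula.

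\emph{Step 1: the building blocks.} Write $f(t)=\int_{[0,\infty]}h(t,\lambda)\,dm(\lambda)$ with $m$ a finite Borel measure. For $\lambda=0$ we have $h(\cdot,0)=1$, so $h(A,0)=h(B,0)=1$ and there is nothing to prove. For $\lambda=\infty$ we have $h(A,\infty)=A\le B=h(B,\infty)$ by hypothesis. For $\lambda\in(0,\infty)$, note that
\[
h(t,\lambda)=\frac{1+\lambda}{\lambda}\cdot\frac{t}{1+t/\lambda}.
\]
Hence $h(A,\lambda)=\frac{1+\lambda}{\lambda}\,A(1+\lambda^{-1}A)^{-1}$. \Cref{lem:Lowner_bdd}, applied with parameter $1/\lambda$, then gives $h(A,\lambda)\le h(B,\lambda)$. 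These operators are bounded, so this is the ordinary Löwner order.

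\emph{Step 2: integrating over $\lambda$.} By the spectral theorem, for every $\xi\in H$,
\[
Q_{f(A)}(\xi)=\int_{\IR_+} f(s)\,d\norm{E_A(s)\xi}^2=\int_{[0,\infty]}\int_{\IR_+}h(s,\lambda)\,d\norm{E_A(s)\xi}^2\,dm(\lambda).
\]
The interchange of integrals is justified by Tonelli, since $h\ge0$ is jointly measurable. The inner integral equals $Q_{h(A,\lambda)}(\xi)$, understood as $\norm{\xi}^2$ for $\lambda=0$ and as $Q_A(\xi)$ for $\lambda=\infty$. The same identity holds with $B$ in place of $A$. Step~1 gives pointwise in $\lambda$ the inequality $Q_{h(A,\lambda)}(\xi)\le Q_{h(B,\lambda)}(\xi)$, allowing the value $+\infty$ at $\lambda=\infty$. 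Integrating against $m$ yields $Q_{f(A)}(\xi)\le Q_{f(B)}(\xi)$ for all $\xi$, which is exactly $f(A)\le f(B)$.

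\emph{Main obstacle.} The delicate point is the correct handling of values $+\infty$ in the quadratic forms. One has to check that $Q_{f(A)}(\xi)=\int f\,d\norm{E_A\xi}^2$ holds even when the integral is infinite; this follows from the definition of $Q_A$ applied to the positive self-adjoint operator $f(A)$ and the spectral mapping of measures. One must also check that Tonelli applies with extended-valued integrands, which it does because everything is nonnegative.

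An alternative route avoids the integral representation. One would apply \Cref{lem:Lowner_bdd} to get $A(1+\mu A)^{-1}\le B(1+\mu B)^{-1}$ for bounded operators, use operator monotonicity of $f$ in the bounded case, and pass to the limit $\mu\to0$ using monotone convergence of $f(A(1+\mu A)^{-1})$. However, that limit requires more justification, so I would follow the integral route above.
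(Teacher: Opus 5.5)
Your proposal is correct and is essentially the paper's proof: Löwner's representation, the spectral theorem to write $Q_{f(A)}(\xi)$ as an $m$-integral of $Q_{h(A,\lambda)}(\xi)$, the trivial cases $\lambda\in\{0,\infty\}$, and \Cref{lem:Lowner_bdd} with parameter $1/\lambda$ for $\lambda\in(0,\infty)$ via $h(A,\lambda)=\frac{1+\lambda}{\lambda}A(1+\lambda^{-1}A)^{-1}$. Your explicit Tonelli argument for extended-valued nonnegative integrands makes precise a step the paper leaves implicit; the paper's corresponding display also writes $Q_A(\xi)$ where $Q_{f(A)}(\xi)$ is meant.
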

\begin{proof}
    By Löwner's theorem, there exists a finite Borel measure $m$ on $[0,\infty]$ such that
    \begin{equation*}
        f(t)=\int_{[0,\infty]}h(t,\lambda)\,dm(\lambda).
    \end{equation*}
    Let $E_A$, $E_B$ denote the spectral measure of $A$ and $B$, respectively. By the spectral theorem,
    \begin{align*}
        Q_A(\xi)=\int_{[0,\infty]}\int_{\IR_+}h(t,\lambda)\,d\norm{E_A(t)\xi}^2\,dm(\lambda),\quad \xi\in H,
    \end{align*}
    and an analogous formula holds for $Q_B$ instead of $Q_A$. Hence it suffices to show that $h(A,\lambda)\leq h(B,\lambda)$ for all $\lambda\in [0,\infty]$. For $\lambda=0$ and $\lambda=\infty$, this is clear from the definition. For $\lambda\in (0,\infty)$, we have $h(A,\lambda)=\frac{1+\lambda}{\lambda}A(1+\lambda^{-1}A)^{-1}$ and an analogous identity holds for $B$ instead of $A$. Thus $h(A,\lambda)\leq h(B,\lambda)$ follows from \Cref{lem:Lowner_bdd}.
\end{proof}

Let us also note the following easy consequence of Löwner's theorem.

\begin{lemma}\label{lem:bound_OM}
    If $f\in\OM_1$, then $f(t)\leq t+1$ for all $t\geq 0$.
\end{lemma}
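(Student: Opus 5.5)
By Löwner's theorem, $f\in\OM_1$ can be written as
\begin{equation*}
    f(t)=\int_{[0,\infty]}h(t,\lambda)\,dm(\lambda)
\end{equation*}
with $m$ a probability measure on $[0,\infty]$. Since $m$ has total mass one, the claim $f(t)\leq t+1$ will follow by integrating a pointwise bound. So the plan is to show $h(t,\lambda)\leq t+1$ for every $t\geq 0$ and every $\lambda\in[0,\infty]$, and then integrate against $m$ to get $f(t)\leq (t+1)\,m([0,\infty])=t+1$.

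The endpoint values of $\lambda$ are immediate. For $\lambda=0$ we have $h(t,0)=1\leq t+1$, and for $\lambda=\infty$ we have $h(t,\infty)=t\leq t+1$.

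For $\lambda\in(0,\infty)$ and $t\geq 0$, use $t+\lambda>0$ to rewrite
\begin{equation*}
    h(t,\lambda)=\frac{(1+\lambda)t}{t+\lambda}=\frac{t}{t+\lambda}+\frac{\lambda t}{t+\lambda}.
\end{equation*}
The first summand is at most $1$ because $t\leq t+\lambda$. The second summand is at most $t$ because $\lambda\leq t+\lambda$. Adding these gives $h(t,\lambda)\leq 1+t$, which is the pointwise bound we need.

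There is no real obstacle here. The only point needing a word of care is that the integral representation from Löwner's theorem holds pointwise for each $t$, so the pointwise bound on $h(t,\cdot)$ can be integrated directly against $m$.
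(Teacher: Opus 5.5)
Your proposal is correct and follows essentially the same route as the paper: write $f$ via Löwner's theorem with a probability measure $m$, bound $h(t,\lambda)$ pointwise in $\lambda$, and integrate. The only difference is that the paper uses the slightly sharper pointwise bound $h(t,\lambda)\leq\max\{1,t\}$, whereas you obtain $h(t,\lambda)\leq 1+t$ directly by splitting the fraction; both suffice for the claim.
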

\begin{proof}
    Since for all $\lambda\in [0,\infty]$ one has $h(t,\lambda)\leq 1$ if $t\leq 1$ and $h(t,\lambda)\leq t$ if $t\geq 1$, this follows directly from Löwner's theorem.
\end{proof}

\subsection{Modular theory and inner products induced by a state}

Let $M$ be a von Neumann algebra and $\phi$ a faithful normal state on $M$. We denote by $(L^2(M,\phi),\pi_\phi,\Omega_\phi)$ a GNS triple associated with $\phi$, that is, $L^2(M,\phi)$ is a Hilbert space, $\pi_\phi\colon M\to \mathbb B(L^2(M,\phi))$ is a unital $\ast$-homomorphism and $\Omega_\phi\in L^2(M,\phi)$ such that $\phi(x)=\langle \Omega_\phi,\pi_\phi(x)\Omega_\phi\rangle$ for $x\in M$. Since $\phi$ is faithful, the representation $\pi_\phi$ is faithful. In the following, we will identify $M$ with $\pi_\phi(M)$ and omit writing $\pi_\phi$.

The map
\begin{equation*}
    M\Omega_\phi\to [0,\infty),\,x\Omega_\phi\mapsto \norm{x^\ast\Omega_\phi}^2
\end{equation*}
is a densely defined closable quadratic form. Let $Q$ denote its closure and let $\Delta_\phi$ be the unique positive self-adjoint operator on $L^2(M,\phi)$ such that $Q=Q_{\Delta_\phi}$. The operator $\Delta_\phi$ is called the modular operator. By definition, $M\Omega_\phi\subset \dom(Q)=\dom(\Delta_\phi^{1/2})$. Moreover, $\Delta_\phi$ is injective.

By Tomita's theorem, one has $\Delta_\phi^{it}M\Delta_\phi^{-it}=M$ for all $t\in \IR$. The family $\sigma^\phi=(\sigma^\phi_t)_{t\in \IR}$ given by $\sigma^\phi_t(x)=\Delta_\phi^{it}x\Delta_\phi^{-it}$ for $x\in M$ and $t\in\IR$ is called the \emph{modular group} of $\phi$.

\begin{lemma}
    If $f\in\OM_1$, then $f(\Delta_\phi)\leq \Delta_\phi+1$ and $M\Omega_\phi\subset \dom(f(\Delta_\phi)^{1/2})$.
\end{lemma}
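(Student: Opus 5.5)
The plan is to use the functional calculus for $\Delta_\phi$ together with \Cref{lem:bound_OM}. Let $E$ be the spectral measure of $\Delta_\phi$. Since $f\in\OM_1$, \Cref{lem:bound_OM} gives $f(t)\leq t+1$ for all $t\geq 0$. For every $\xi\in L^2(M,\phi)$ the spectral theorem then yields
\begin{equation*}
    Q_{f(\Delta_\phi)}(\xi)=\int_{\IR_+}f(t)\,d\norm{E(t)\xi}^2\leq \int_{\IR_+}(t+1)\,d\norm{E(t)\xi}^2=Q_{\Delta_\phi+1}(\xi),
\end{equation*}
with the convention that both sides may be $+\infty$. By the definition of the order for possibly unbounded positive self-adjoint operators, this is exactly $f(\Delta_\phi)\leq\Delta_\phi+1$. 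The only point to note is that $f(\Delta_\phi)$ is a positive self-adjoint operator defined by the Borel functional calculus, with $Q_{f(\Delta_\phi)}$ given by the displayed integral. This holds because the form of a function of $\Delta_\phi$ is the integral of that function against $d\norm{E(t)\xi}^2$.

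For the domain statement, I take $x\in M$. By construction of $\Delta_\phi$ as the operator associated with the closure of $x\Omega_\phi\mapsto\norm{x^\ast\Omega_\phi}^2$, we have $x\Omega_\phi\in\dom(\Delta_\phi^{1/2})$ and $Q_{\Delta_\phi}(x\Omega_\phi)=\norm{x^\ast\Omega_\phi}^2<\infty$. Therefore
\begin{equation*}
    Q_{\Delta_\phi+1}(x\Omega_\phi)=\norm{x^\ast\Omega_\phi}^2+\norm{x\Omega_\phi}^2<\infty,
\end{equation*}
and the inequality above gives $Q_{f(\Delta_\phi)}(x\Omega_\phi)<\infty$. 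In other words, $x\Omega_\phi\in\dom(Q_{f(\Delta_\phi)})=\dom(f(\Delta_\phi)^{1/2})$.

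There is no real obstacle here. The one thing to check is that $Q_{\Delta_\phi+1}=Q_{\Delta_\phi}+\norm{\cdot}^2$ on all of $H$, including the value $\infty$ off the domain. This is immediate from the integral representation, since $\dom((\Delta_\phi+1)^{1/2})=\dom(\Delta_\phi^{1/2})$.
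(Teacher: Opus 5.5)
Your proposal is correct and follows essentially the same route as the paper. Like the paper, you compare the spectral integrals using $f(t)\leq t+1$ from \Cref{lem:bound_OM}, and then deduce the domain inclusion from $M\Omega_\phi\subset\dom(\Delta_\phi^{1/2})=\dom((\Delta_\phi+1)^{1/2})$.
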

\begin{proof}
    Let $\xi\in L^2(M,\phi)$ and let $E$ denote the spectral measure of $\Delta_\phi$. By \Cref{lem:bound_OM} we have
    \begin{equation*}
        Q_{f(\Delta_\phi)}(\xi)=\int_{\IR_+}f(t)\,d\norm{E(t)\xi}^2\leq \int_{\IR_+}(t+1)\,d\norm{E(t)\xi}^2=Q_{\Delta_\phi+1}(\xi).
    \end{equation*}
    Thus $f(\Delta_\phi)\leq \Delta_\phi+1$. By definition of $\Delta_\phi$ and the order on positive self-adjoint operators, $M\Omega_\phi\subset \dom(\Delta_\phi^{1/2})=\dom((\Delta_\phi+1)^{1/2})\subset \dom(f(\Delta_\phi)^{1/2})$.
\end{proof}

If $f\in \OM_1$, we define an inner product $\ip{\cdot}{\cdot}_f$ on $M$ by
\begin{equation*}
    \ip{x}{y}_f=\langle f(\Delta_\phi)^{1/2}x\Omega_\phi,f(\Delta_\phi)^{1/2}y\Omega_\phi\rangle,\quad x,y\in M,
\end{equation*}
and let $\norm\cdot_f$ denote the induced norm on $M$. We write $H_f$ for the completion of $M$ with respect to the inner product $\ip{\cdot}{\cdot}_f$. The previous lemma ensures that $\ip\cdot\cdot_f$ is well-defined.

This class of inner products has been studied before in the context of interpolation theory of Hilbert spaces \cite{Don67}, generalized quantum detailed balance conditions \cite{TKRWV10} and monotone metrics \cite{Pet96}.

\begin{remark}
    By definition, the norm $\norm\cdot_f$ is given by
    \begin{equation*}
        \norm{x}_f^2=Q_{f(\Delta_\phi)}(x\Omega_\phi),\quad x\in M.
    \end{equation*}
\end{remark}

\begin{example}
    \begin{enumerate}[(a)]
        \item If $f=1$, then $\ip\cdot\cdot_f$ is the usual GNS inner product, i.e., $\ip{x}{y}_1=\phi(x^\ast y)$, and $H_1$ is canonically identified with $L^2(M,\phi)$ via the map $x\mapsto x\Omega_\phi$. In this case, we also write $ \ip\cdot\cdot_{\mathrm{GNS}}$, $\norm\cdot_{\mathrm{GNS}}$ etc. instead of $\ip\cdot\cdot_{1}$, $\norm\cdot_{1}$ and so on.
        \item If $f=\id$, then $\ip{x}{y}_f=\phi(yx^\ast)$, a kind of ``anti-GNS'' inner product. As we will see below, these two inner products are extremal in a certain sense.
        \item If $f=\sqrt\cdot$, then $\ip\cdot\cdot_f$ is called the KMS inner product. In the case when $M=\mathbb B(H)$, $\phi=\operatorname{tr}(\,\cdot\,\rho)$ with a positive trace-class operator $\rho$, then the KMS inner product can be explicitly expressed as
        \begin{equation*}
            \ip{x}{y}_{\sqrt\cdot}=\operatorname{tr}(x^\ast\rho^{1/2}y\rho^{1/2}),\quad x,y\in \mathbb B(H).
        \end{equation*}
        As for the GNS inner product, we will also use the subscript $\mathrm{KMS}$ instead of $\sqrt\cdot$ for the KMS inner product, induced norm etc.
        \item If $f(t)=\frac{t-1}{\log t}$ for $t>0$ and $f(0)=0$, then the inner product $\ip\cdot\cdot_f$ is called the BKM inner product. It has recently found applications in the context of modified logarithmic Sobolev inequalities for quantum Markov semigroups and entropic gradient flows \cite{CM20,BM24}. If $M=\mathbb B(H)$, $\phi=\operatorname{tr}(\,\cdot\,\rho)$ with a positive trace-class operator $\rho$, then the BKM inner product can be expressed as
        \begin{equation*}
            \ip{x}{y}_f=\int_0^1 \operatorname{tr}(x^\ast\rho^s y\rho^{1-s})\,ds.
        \end{equation*}
    \end{enumerate}    
\end{example}

\section{Comparison of spectral gaps}\label{sec:comparison}

In this section, we establish that the KMS spectral gap is always bounded below by the GNS spectral gap (\Cref{cor:comparison_spectral_gap}). To do so, we first prove a general interpolation theorem (\Cref{thm:exact_interpolation}). This will also be instrumental in showing that a unital completely positive map with faithful normal invariant state is contractive with respect to the $f$-inner product from the previous section for an arbitrary operator monotone $f\colon\IR_+\to\IR_+$.

Throughout this section, let $M$ be a von Neumann algebra and $\phi$ a faithful normal state on $M$.

\begin{theorem}\label{thm:exact_interpolation}
    Let $P\colon M\to M$ be a linear projection that is $\ast$-preserving and satisfies 
    \begin{equation*}
        \phi(P(x)^\ast P(x))\leq \phi(x^\ast x),\quad x\in M.
    \end{equation*}
    Let $X=\operatorname{ran} P$. If $\Phi\colon X\to X$ is a $\ast$-preserving linear map such that 
    \begin{equation*}
        \phi(\Phi(x)^\ast \Phi(x))\leq \phi(x^\ast x),\quad x\in X,
    \end{equation*}
    then
    \begin{equation*}
        \norm{\Phi(x)}_f\leq \norm{x}_f,\quad x\in X,
    \end{equation*}
    for all $f\in\OM_1$.
\end{theorem}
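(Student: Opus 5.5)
The idea is to reduce, via Löwner's theorem, to the extremal functions $h(\cdot,\lambda)$. For each of these the $f$-norm has a variational (Moreau) description involving only the two ``endpoint'' norms $\norm\cdot_1$ and $\norm\cdot_{\id}$, and both $P$ and $\Phi$ are contractive for those.

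\emph{Step 1 (endpoints).} For $x\in M$ we have $\norm{x}_1^2=\phi(x^\ast x)$ and $\norm{x}_{\id}^2=Q_{\Delta_\phi}(x\Omega_\phi)=\norm{x^\ast\Omega_\phi}^2=\phi(xx^\ast)$. Since $\Phi$ is $\ast$-preserving, for $x\in X$
\begin{equation*}
\norm{\Phi(x)}_{\id}^2=\phi(\Phi(x^\ast)^\ast\Phi(x^\ast))\leq\phi(xx^\ast)=\norm{x}_{\id}^2 .
\end{equation*}
The same argument applies to $P$ on all of $M$. So $P$ and $\Phi$ are contractive for both $\norm\cdot_1$ and $\norm\cdot_{\id}$.

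\emph{Step 2 (variational formula).} Let $f\in\OM_1$ have Löwner measure $m$. By the spectral theorem and Tonelli, $\norm{x}_f^2=\int_{[0,\infty]}Q_{h(\Delta_\phi,\lambda)}(x\Omega_\phi)\,dm(\lambda)$, so it suffices to show $Q_{h(\Delta_\phi,\lambda)}(\Phi(x)\Omega_\phi)\leq Q_{h(\Delta_\phi,\lambda)}(x\Omega_\phi)$ for every $\lambda\in[0,\infty]$. For $\lambda=0$ and $\lambda=\infty$ this is Step 1. For $\lambda\in(0,\infty)$ we have $h(\Delta_\phi,\lambda)=\frac{1+\lambda}{\lambda}\,\Delta_\phi(1+\lambda^{-1}\Delta_\phi)^{-1}$, and \Cref{lem:Moreau_approx} gives
\begin{equation*}
Q_{h(\Delta_\phi,\lambda)}(\xi)=\frac{1+\lambda}{\lambda}\inf_{\eta\in L^2(M,\phi)}\bigl(Q_{\Delta_\phi}(\eta)+\lambda\norm{\xi-\eta}^2\bigr).
\end{equation*}
Since $Q_{\Delta_\phi}$ is by definition the closure of the form $y\Omega_\phi\mapsto\norm{y^\ast\Omega_\phi}^2$ on $M\Omega_\phi$, every $\eta\in\dom(Q_{\Delta_\phi})$ is the limit of some $y_n\Omega_\phi$ with $Q_{\Delta_\phi}(y_n\Omega_\phi)\to Q_{\Delta_\phi}(\eta)$. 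Hence the infimum may be taken over $\eta=y\Omega_\phi$ with $y\in M$, and for $x\in M$
\begin{equation*}
Q_{h(\Delta_\phi,\lambda)}(x\Omega_\phi)=\frac{1+\lambda}{\lambda}\inf_{y\in M}\bigl(\phi(yy^\ast)+\lambda\,\phi((x-y)^\ast(x-y))\bigr).
\end{equation*}
I expect this step, the justification that $M\Omega_\phi$ is a form core, to be the only technical point; it is exactly what the closure construction of $\Delta_\phi$ provides.

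\emph{Step 3 (restricting to $X$).} Now let $x\in X$. Replacing $y$ by $P(y)\in X$ does not increase the functional. Indeed, $\phi(P(y)P(y)^\ast)\leq\phi(yy^\ast)$ by Step 1, and $x-P(y)=P(x-y)$ because $P$ is a projection onto $X$, so
\begin{equation*}
\phi((x-P(y))^\ast(x-P(y)))\leq\phi((x-y)^\ast(x-y)).
\end{equation*}
Hence the infimum may be taken over $y\in X$ only.

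\emph{Step 4 (comparison for $\Phi$).} For any $y\in X$, the element $\Phi(y)\in X$ is an admissible competitor in the formula for $Q_{h(\Delta_\phi,\lambda)}(\Phi(x)\Omega_\phi)$. By Step 1 applied to $y$ and to $x-y\in X$,
\begin{equation*}
\phi(\Phi(y)\Phi(y)^\ast)+\lambda\,\phi(\Phi(x-y)^\ast\Phi(x-y))\leq\phi(yy^\ast)+\lambda\,\phi((x-y)^\ast(x-y)).
\end{equation*}
Taking the infimum over $y\in X$ gives $Q_{h(\Delta_\phi,\lambda)}(\Phi(x)\Omega_\phi)\leq Q_{h(\Delta_\phi,\lambda)}(x\Omega_\phi)$. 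Integrating against $m$ yields $\norm{\Phi(x)}_f\leq\norm{x}_f$.
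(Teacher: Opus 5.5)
Your proof is correct and follows essentially the same route as the paper. Both arguments use the Löwner reduction to $h(\cdot,\lambda)$, the Moreau variational formula for $Q_{h(\Delta_\phi,\lambda)}$, a competitor argument based on GNS- and $\ast$-contractivity, and the fact that $M\Omega_\phi$ is a form core for $Q_{\Delta_\phi}$. The differences are cosmetic: the paper extends $\Phi\circ P$ to a contraction $C$ on $L^2(M,\phi)$, shows $Q_{\Delta_\phi}(C\eta)\leq Q_{\Delta_\phi}(\eta)$ for all $\eta$, and then plugs in $C\eta$, whereas you restrict the infimum to $y\in X$ and plug in $\Phi(y)$, which amounts to the same competitor $\Phi(P(y))$; you also handle the endpoint atoms $\lambda=0,\infty$ explicitly, which the paper leaves implicit.
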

\begin{proof}
    By assumption, $\norm{\Phi(x)\Omega_\phi}\leq \norm{x\Omega_\phi}$ for all $x\in X$. Thus $x\Omega_\phi\mapsto\Phi(x)\Omega_\phi$ extends to a contraction $C$ on $\overline{X\Omega_\phi}$. Moreover, since $\Phi$ is $\ast$-preserving, we have
    \begin{equation*}
        Q_{\Delta_\phi}(\Phi(x)\Omega_\phi)=\phi(\Phi(x)\Phi(x)^\ast)=\phi(\Phi(x^\ast)^\ast \Phi(x^\ast))\leq \phi(xx^\ast)=Q_{\Delta_\phi}(x\Omega_\phi)
    \end{equation*}
    for all $x\in X$.

    Now let us first consider the case $X=M$. If $\eta\in L^2(M,\phi)$, then by definition of $Q_{\Delta_\phi}$, there exists a sequence $(x_n)$ in $M$ such that $x_n\Omega_\phi\to \eta$ and $\limsup_{n\to\infty}Q_{\Delta_\phi}(x_n\Omega_\phi)\leq Q_{\Delta_\phi}(\eta)$. Thus
    \begin{equation*}
        Q_{\Delta_\phi}(C\eta)\leq\liminf_{n\to\infty}Q_{\Delta_\phi}(\Phi(x_n)\Omega_\phi)\leq \limsup_{n\to\infty}Q_{\Delta_\phi}(x_n\Omega_\phi)\leq Q_{\Delta_\phi}(\eta).
    \end{equation*}

    By definition, $\norm{\Phi(x)}_f^2=Q_{f(\Delta_\phi)}(\Phi(x)\Omega_\phi)$ and $\norm{x}_f^2=Q_{f(\Delta_\phi)}(x\Omega_\phi)$ for all $x\in M$. By Löwner's theorem, it suffices to consider the case $f(t)=t(1+\lambda t)^{-1}$ with $\lambda\in (0,\infty)$. For $x\in M$ we have
    \begin{align*}
         \norm{\Phi(x)\Omega_\phi}_f^2&=Q_{\Delta_\phi(1+\lambda \Delta_\phi)^{-1}}(Cx\Omega_\phi)\\
         &=\inf_{\eta\in L^2(M,\phi)}Q_{\Delta_\phi}(\eta)+\frac 1 \lambda\norm{\eta-Cx\Omega_\phi}\\
         &\leq \inf_{\eta\in L^2(M,\phi)}Q_{\Delta_\phi}(C\eta)+\frac 1 \lambda\norm{C(\eta-x\Omega_\phi)}\\
         &\leq \inf_{\eta\in L^2(M,\phi)}Q_{\Delta_\phi}(\eta)+\frac 1 \lambda\norm{\eta-x\Omega_\phi}\\
         &=Q_{\Delta_\phi(1+\lambda \Delta_\phi)^{-1}}(x\Omega_\phi)\\
         &=\norm{x\Omega_\phi}_f^2.
    \end{align*}
    This settles the case $X=M$. In the general case, define $\Psi(x)=\Phi(P(x))$ for $x\in M$. From the assumptions on $\Phi$ it follows easily that $\Psi$ is a $\ast$-preserving linear map and
    \begin{equation*}
        \phi(\Psi(x)^\ast\Psi(x))=\phi(\Phi(P(x))^\ast\Phi(P(x)))\leq \phi(P(x)^\ast P(x))\leq \phi(x^\ast x),\quad x\in M.
    \end{equation*}
    Thus we can apply our result in the case $X=M$ to $\Psi$ and obtain
    \begin{equation*}
        \norm{\Phi(y)}_f=\norm{\Psi(y)}_f\leq \norm{y}_f,\quad y\in X.\qedhere
    \end{equation*}
\end{proof}

\begin{remark}
    The assumptions on $\Phi$ imply that $\Phi$ is contractive with respect to both $\norm\cdot_1$ and $\norm\cdot_\id$. The previous theorem can then be seen as an instance of one direction of Donoghue's interpolation theorem \cite[Theorem 1]{Don67}, which states in our context that $H_f$ is an exact interpolation space for the pair $(H_1,H_\id)$.
\end{remark}

\begin{corollary}\label{lem:contractive_ext}
    Let $f\in \OM_1$. If $\Phi\colon M\to M$ is a unital completely positive map such that $\phi\circ\Phi=\phi$, then $\Phi$ extends to a contractive linear map $T$ on $H_f$.
\end{corollary}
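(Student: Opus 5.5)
The plan is to apply \Cref{thm:exact_interpolation} with $P=\id_M$ and $X=M$. This reduces the claim to checking that $\Phi$ is $\ast$-preserving and contractive for the GNS norm. Once $\norm{\Phi(x)}_f\le\norm{x}_f$ holds for all $x\in M$, the extension to $H_f$ is routine.

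For the hypotheses on $P$: the identity is trivially a $\ast$-preserving linear projection with $\phi(x^\ast x)\le\phi(x^\ast x)$. For $\Phi$, positivity gives $\ast$-preservation. Indeed, every $x$ is a linear combination of positive elements, so $\Phi(x^\ast)=\Phi(x)^\ast$.

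The key estimate is the Kadison--Schwarz inequality $\Phi(x)^\ast\Phi(x)\le\Phi(x^\ast x)$. It holds for unital completely positive maps; in fact $2$-positivity and unitality suffice, via Stinespring or the positivity of $\Phi^{(2)}$ applied to $\begin{pmatrix}1&x\\x^\ast&x^\ast x\end{pmatrix}$. Applying the positive functional $\phi$ and using invariance gives
\begin{equation*}
\phi(\Phi(x)^\ast\Phi(x))\le\phi(\Phi(x^\ast x))=\phi(x^\ast x),\quad x\in M.
\end{equation*}
Hence \Cref{thm:exact_interpolation} yields $\norm{\Phi(x)}_f\le\norm{x}_f$ for every $x\in M$ and every $f\in\OM_1$.

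Finally, $M$ is dense in $H_f$ by definition of the completion. A linear map that is contractive for $\norm\cdot_f$ on a dense subspace extends uniquely by continuity to a contraction $T$ on $H_f$. One point needs care: $\norm\cdot_f$ must be a genuine norm on $M$ (not merely a seminorm), so that $M\subset H_f$ embeds injectively. This holds because $\Omega_\phi$ is separating, since $\phi$ is faithful, and $f(\Delta_\phi)$ is injective whenever $f$ is not identically zero on $(0,\infty)$. For $f\in\OM_1$, Löwner's representation gives $f>0$ on $(0,\infty)$. Even without injectivity the extension to the Hausdorff completion would still work, so this is no real obstacle.

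The only genuine content is therefore Kadison--Schwarz plus invariance, which I take as standard.
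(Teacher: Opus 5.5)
Your proposal is correct and is the paper's argument: the paper also obtains $\phi(\Phi(x)^\ast\Phi(x))\leq\phi(\Phi(x^\ast x))=\phi(x^\ast x)$ from the Kadison--Schwarz inequality and invariance, and then invokes \Cref{thm:exact_interpolation}. The extra points you verify are details the paper leaves implicit, and they are sound: $\ast$-preservation of positive maps, the choice $P=\id_M$, definiteness of $\norm\cdot_f$ (via $f>0$ on $(0,\infty)$ together with injectivity of $\Delta_\phi$ and $\Omega_\phi$ being separating), and extension by density.
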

\begin{proof}
    By the Kadison--Schwarz inequality,
        \begin{equation*}
            \phi(\Phi(x)^\ast\Phi(x))\leq \phi(\Phi(x^\ast x))=\phi(x^\ast x),\quad x\in M.
        \end{equation*}
    Thus the assumptions of \Cref{thm:exact_interpolation} are satisfied.
\end{proof}

\begin{corollary}\label{cor:interpolation}
    If $\Phi$ is unital completely positive map such that $\phi\circ \Phi=\phi$ and $\omega\in [0,1]$ such that
    \begin{equation*}
        \norm{\Phi(x)}_{\mathrm{GNS}}\leq \omega\norm{x}_{\mathrm{GNS}},\quad x\in M,\phi(x)=0,
    \end{equation*}
    then
    \begin{equation*}
        \norm{\Phi(x)}_{f}\leq \omega\norm{x}_{f},\quad x\in M,\phi(x)=0,
    \end{equation*}
    for all $f\in\OM_1$.
\end{corollary}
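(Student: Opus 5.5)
We will apply \Cref{thm:exact_interpolation} with $P(x)=x-\phi(x)1$ and $\Phi$ replaced by $\omega^{-1}\Phi$ restricted to $X=\operatorname{ran}P$. The case $\omega=0$ is handled at the end.

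\emph{Checking the hypotheses on $P$.} The map $P$ is linear, and $P^2=P$ because $\phi(1)=1$. It is $\ast$-preserving because $\phi(x^\ast)=\overline{\phi(x)}$. Its range is $X=\{x\in M\mid\phi(x)=0\}$. A direct expansion gives
\begin{equation*}
    \phi(P(x)^\ast P(x))=\phi(x^\ast x)-\abs{\phi(x)}^2\leq\phi(x^\ast x),
\end{equation*}
which is exactly the GNS-orthogonal projection onto the complement of $\Omega_\phi$. So the assumptions on $P$ in \Cref{thm:exact_interpolation} hold.

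\emph{Checking the hypotheses on $\Phi$.} Since $\phi\circ\Phi=\phi$, the map $\Phi$ sends $X$ into $X$. A completely positive map is $\ast$-preserving, so $\Phi|_X$ is $\ast$-preserving. For $\omega\in(0,1]$, set $\Phi_\omega=\omega^{-1}\Phi|_X$. This is a $\ast$-preserving linear map $X\to X$, and the hypothesis gives $\phi(\Phi_\omega(x)^\ast\Phi_\omega(x))\leq\phi(x^\ast x)$ for $x\in X$. \Cref{thm:exact_interpolation} then yields $\norm{\Phi_\omega(x)}_f\leq\norm{x}_f$ for all $f\in\OM_1$. Multiplying by $\omega$ gives the claim.

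Note that \Cref{thm:exact_interpolation} never uses complete positivity or Kadison--Schwarz for $\Phi$ itself; it only needs $\ast$-preservation and GNS contractivity. This is why rescaling by $\omega^{-1}$, which destroys unitality, is harmless.

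\emph{The case $\omega=0$.} Here $\Phi(x)\Omega_\phi=0$ for every $x\in X$, and since $\Omega_\phi$ is separating, $\Phi(x)=0$. Hence $\norm{\Phi(x)}_f=0$ and the claim is trivial. Alternatively, one can apply the case of arbitrary $\omega'>0$ and let $\omega'\to0$.

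The only point needing care is confirming that $P$ is really GNS-contractive and $\ast$-preserving. This is immediate from the identity displayed above, so I expect no genuine obstacle.
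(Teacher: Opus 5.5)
Your proposal is correct and is essentially the paper's proof: apply \Cref{thm:exact_interpolation} to $\omega^{-1}\Phi$ with $P=\id_M-\phi(\cdot)1$ and $X=\ker\phi$, and treat $\omega=0$ separately as trivial. Your checks of the hypotheses fill in details the paper leaves implicit: idempotence, $\ast$-preservation and GNS-contractivity of $P$, invariance of $\ker\phi$ under $\Phi$, and the observation that rescaling by $\omega^{-1}$ is harmless.
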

\begin{proof}
    This is trivial for $\omega=0$ and otherwise an immediate consequence of \Cref{thm:exact_interpolation} applied to $\frac 1\omega \Phi$, $P=\id_M-\phi(\cdot)1$ and $X=\ker \phi.$
\end{proof}

We want to apply this result to quantum Markov semigroups. Let us first state the definition. A family $(\Phi_t)_{t\geq 0}$ of normal unital completely positive maps on $M$ is called a \emph{quantum Markov semigroup} if $\Phi_0=\id_M$, $\Phi_{s+t}=\Phi_s\Phi_t$ for all $s,t\geq 0$ and $\Phi_t(x)\to x$ weak$^\ast$ as $t\to 0$ for every $x\in M$. We say that $\phi$ is an \emph{invariant state} of $(\Phi_t)_{t\geq 0}$ if $\phi\circ\Phi_t=\phi$ for all $t\geq 0$.

\begin{proposition}\label{prop:C_0_ext}
    If $(\Phi_t)_{t\geq 0}$ is a quantum Markov semigroup on $M$ with invariant state $\phi$ and $f\in\OM_1$, then $(\Phi_t)$ extends to a strongly continuous contraction semigroup $(T_t)_{t\geq 0}$ on $H_f$.
\end{proposition}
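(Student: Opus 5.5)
By \Cref{lem:contractive_ext}, each $\Phi_t$ extends to a contraction $T_t$ on $H_f$. Since $M$ is dense in $H_f$ and the $T_t$ are bounded, the semigroup law $T_{s+t}=T_sT_t$ and $T_0=\id$ follow from the corresponding identities for $\Phi_t$ on the dense subspace $M$. The only real work is strong continuity.

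Because $\sup_t\norm{T_t}\leq 1$, a $3\varepsilon$-argument reduces strong continuity to showing $\norm{\Phi_t(x)-x}_f\to 0$ as $t\to 0$ for $x$ in a dense subset, e.g. for all $x\in M$. It also suffices to prove weak continuity, $\ip{y}{\Phi_t(x)}_f\to\ip{y}{x}_f$ for $x,y\in M$. Indeed, for a contraction semigroup, weak convergence $T_tx\rightharpoonup x$ gives
\begin{equation*}
\norm{T_tx-x}_f^2=\norm{T_tx}_f^2-2\operatorname{Re}\ip{x}{T_tx}_f+\norm{x}_f^2\leq 2\norm{x}_f^2-2\operatorname{Re}\ip{x}{T_tx}_f\to 0,
\end{equation*}
using $\norm{T_tx}_f\leq\norm{x}_f$. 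Weak continuity on the dense set $M$, together with uniform boundedness, then gives weak continuity on all of $H_f$.

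The weak continuity is the main obstacle, because $\ip{y}{\cdot}_f$ involves the unbounded operator $f(\Delta_\phi)^{1/2}$. My plan is to write $\ip{y}{z}_f=\ip{f(\Delta_\phi)^{1/2}y\Omega_\phi}{f(\Delta_\phi)^{1/2}z\Omega_\phi}$ and use the following fact: for $\xi_t=\Phi_t(x)\Omega_\phi$, the GNS-convergence $\xi_t\to x\Omega_\phi$ together with a uniform bound on $Q_{f(\Delta_\phi)}(\xi_t)$ implies $f(\Delta_\phi)^{1/2}\xi_t\rightharpoonup f(\Delta_\phi)^{1/2}x\Omega_\phi$, because $f(\Delta_\phi)^{1/2}$ is closed.

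Both inputs hold. The uniform bound follows from contractivity, $\norm{\Phi_t(x)}_f\leq\norm{x}_f$. For the GNS-convergence, weak$^\ast$ convergence $\Phi_t(x)\to x$ gives $\Phi_t(x)\Omega_\phi\rightharpoonup x\Omega_\phi$ weakly in $L^2(M,\phi)$. This is upgraded to norm convergence by the same norm argument as above with $f=1$: $\norm{\Phi_t(x)\Omega_\phi}\leq\norm{x\Omega_\phi}$ by Kadison--Schwarz and invariance. A bounded sequence in the graph norm of the closed operator $f(\Delta_\phi)^{1/2}$ converging in $L^2$ has weakly convergent image with the correct limit: every weak cluster point $(\xi,\zeta)$ lies in the graph, and $\xi=x\Omega_\phi$. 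Taking the inner product with $f(\Delta_\phi)^{1/2}y\Omega_\phi$ yields the required weak continuity, and hence strong continuity of $(T_t)$ on $H_f$.
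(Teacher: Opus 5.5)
Your proof is correct, but it handles the key difficulty differently from the paper. The paper never works with $f(\Delta_\phi)^{1/2}$ directly in the continuity step. It uses $f(t)\leq 1+t$ (\Cref{lem:bound_OM}) to get $\norm{\Phi_t(x)-x}_f^2\leq\norm{\Phi_t(x)-x}_1^2+\norm{\Phi_t(x)-x}_{\id}^2$, so only $f=1$ and $f=\id$ need to be treated. For these two, the matrix coefficients are the explicit expressions $\phi((\Phi_t(x)-x)^\ast y)$ and $\phi(y(\Phi_t(x)-x)^\ast)$, which tend to $0$ by weak$^\ast$ continuity. Strong continuity then follows from the cited Engel--Nagel result that weakly continuous semigroups are strongly continuous.

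You instead stay in $H_f$ and pass the unbounded operator $f(\Delta_\phi)^{1/2}$ to the limit via weak closedness of its graph. This uses the uniform bound $\norm{\Phi_t(x)}_f\leq\norm{x}_f$ from \Cref{thm:exact_interpolation}. You also replace the cited semigroup theorem by the elementary estimate $\norm{T_tx-x}^2\leq 2\norm{x}^2-2\operatorname{Re}\ip{x}{T_tx}$ for Hilbert space contractions, applied once in $L^2(M,\phi)$ and once in $H_f$.

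The paper's route keeps everything at the level of explicit bounded formulas and needs contractivity only for $f=1,\id$ in this step. The price is the pointwise domination $f\leq 1+\id$ and an external theorem. Yours is self-contained and uses the domination only for well-definedness. It therefore applies verbatim to any $f$ with $M\Omega_\phi\subset\dom(f(\Delta_\phi)^{1/2})$ for which $H_f$-contractivity is known.

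A minor point: the upgrade of $\Phi_t(x)\Omega_\phi\to x\Omega_\phi$ to norm convergence is not needed. Weak convergence already forces the first component of any weak cluster point of $(\xi_{t_n},f(\Delta_\phi)^{1/2}\xi_{t_n})$ to be $x\Omega_\phi$.
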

\begin{proof}
    By \Cref{lem:contractive_ext}, it only remains to show that $(T_t)_{t\geq 0}$ is strongly continuous. Since the family $(T_t)$ is uniformly bounded, it is enough to prove that $\norm{\Phi_t(x)-x}_f\to 0$ as $t\to 0$ for every $x\in M$. By \Cref{lem:bound_OM}, we have $\norm{\Phi_t(x)-x}_f^2\leq \norm{\Phi_t(x)-x}_{\id}^2+\norm{\Phi_t(x)-x}_{1}^2$. Thus it suffices to show the claim for $f=1$ and $f=\id$.

    By \cite[Theorem II.5.8]{EN00}, it is enough to show that $\langle (\Phi_t(x)-x)\Omega_\phi,y\Omega_\phi\rangle_f\to 0$ as $t\to 0$ for all $x,y\in M$. By definition, 
    \begin{align*}
        \langle (\Phi_t(x)-x)\Omega_\phi,y\Omega_\phi\rangle_1&=\phi((\Phi_t(x)-x)^\ast y),\\
        \langle (\Phi_t(x)-x)\Omega_\phi,y\Omega_\phi\rangle_\id&=\phi(y(\Phi_t(x)-x)^\ast).
    \end{align*}
    Both expressions converge to $0$ as $t\to 0$ as a consequence of the point-weak$^\ast$ continuity of $(\Phi_t)_{t\geq 0}$.
\end{proof}

Let $f\in\OM_1$. Following \cite{FPSU25}, we say that a quantum Markov semigroup $(\Phi_t)_{t\geq 0}$ on $M$ with invariant state $\phi$ has \emph{$f$-spectral gap} if there exists $\lambda>0$ such that
\begin{equation}\label{eq:spectral_gap}
    \norm{\Phi_t(x)}_f\leq e^{-\lambda t}\norm{x}_f,\quad x\in \ker\phi,\,t\geq 0.\tag{$\clubsuit$}
\end{equation}
In this case, we call the supremum of all $\lambda>0$ for which \eqref{eq:spectral_gap} holds the \emph{$f$-spectral gap} of $(\Phi_t)_{t\geq 0}$. As before, we use terms GNS and KMS spectral gap for $f=1$ and $f=\sqrt\cdot$, respectively.

Let us briefly clarify how this definition of the spectral gap is related to the numerical range of the generator. First recall that if $(T_t)_{t\geq 0}$ is a strongly continuous semigroup on a Banach space $X$, then its generator is the closed densely defined operator $A$ on $X$ given by
\begin{align*}
    \dom(A)&=\{x\in X\mid \lim_{t\to 0}\frac 1 t (T_t(x)-x)\text{ exists}\},\\
    Ax&=\lim_{t\to 0}\frac 1 t(T_t(x)-x),\quad x\in \dom(A).
\end{align*}

The following lemma is well-known in the theory of operator semigroups, we just include its short proof for the convenience of the reader.
\begin{lemma}\label{lem:exp_decay}
    If $H$ is a Hilbert space, $A$ the generator of a strongly continuous semigroup $(T_t)_{t\geq 0}$ on $H$ and $\lambda\in \IR$, the following two conditions are equivalent:
    \begin{enumerate}[(i)]
        \item $\norm{T_t}\leq e^{-\lambda t}$ for all $t\geq 0$,
        \item $\operatorname{Re}\langle\xi,A\xi\rangle\leq -\lambda\norm{\xi}^2$ for all $\xi\in \dom(A)$.
    \end{enumerate}
\end{lemma}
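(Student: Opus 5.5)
I would prove the two implications separately, both by differentiating or integrating the function $g(t)=\norm{T_t\xi}^2$ for fixed $\xi$. This is the standard Lumer--Phillips-type argument. The main technical point is that $g$ is differentiable only when $\xi\in\dom(A)$.

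\emph{(i)$\Rightarrow$(ii).} Fix $\xi\in\dom(A)$. By (i), $\norm{T_t\xi}^2\leq e^{-2\lambda t}\norm{\xi}^2$ for all $t\geq 0$, with equality at $t=0$. Hence the right derivative at $0$ of $t\mapsto\norm{T_t\xi}^2-e^{-2\lambda t}\norm{\xi}^2$ is $\leq 0$, provided it exists. Existence follows from
\begin{equation*}
    \frac{1}{t}\bigl(\norm{T_t\xi}^2-\norm{\xi}^2\bigr)=\Bigl\langle T_t\xi,\frac{T_t\xi-\xi}{t}\Bigr\rangle+\Bigl\langle\frac{T_t\xi-\xi}{t},\xi\Bigr\rangle ,
\end{equation*}
which converges to $2\operatorname{Re}\langle\xi,A\xi\rangle$ because $T_t\xi\to\xi$ and $\frac1t(T_t\xi-\xi)\to A\xi$. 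The right derivative of $e^{-2\lambda t}\norm{\xi}^2$ at $0$ is $-2\lambda\norm{\xi}^2$. Comparing the two gives $2\operatorname{Re}\langle\xi,A\xi\rangle\leq-2\lambda\norm{\xi}^2$, which is (ii).

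\emph{(ii)$\Rightarrow$(i).} Fix $\xi\in\dom(A)$. Since $T_t$ leaves $\dom(A)$ invariant and $\frac{d}{dt}T_t\xi=AT_t\xi$ is continuous in $t$, the function $g(t)=\norm{T_t\xi}^2$ is $C^1$ on $[0,\infty)$. The same computation as above gives $g'(t)=2\operatorname{Re}\langle T_t\xi,AT_t\xi\rangle$. Applying (ii) to $T_t\xi\in\dom(A)$ yields $g'(t)\leq-2\lambda g(t)$. Therefore $\frac{d}{dt}\bigl(e^{2\lambda t}g(t)\bigr)\leq 0$, so $g(t)\leq e^{-2\lambda t}\norm{\xi}^2$. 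This is the bound $\norm{T_t\xi}\leq e^{-\lambda t}\norm{\xi}$ on $\dom(A)$. Since $\dom(A)$ is dense and $T_t$ is bounded, the inequality extends to all of $H$, giving $\norm{T_t}\leq e^{-\lambda t}$.

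The only step needing care is the differentiability of $g$, which rests on the standard facts that $T_t\dom(A)\subset\dom(A)$ and $\frac{d}{dt}T_t\xi=AT_t\xi=T_tA\xi$. I would cite these from \cite{EN00}. Everything else is elementary calculus.
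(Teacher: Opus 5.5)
Your proof is correct and follows essentially the same route as the paper. In (ii)$\Rightarrow$(i) both arguments differentiate $\norm{T_t\xi}^2$ for $\xi\in\dom(A)$, apply Gr\"onwall (your integrating factor $e^{2\lambda t}$) and extend by density; the only cosmetic difference is in (i)$\Rightarrow$(ii), where the paper takes the difference quotient of $\operatorname{Re}\langle\xi,T_t\xi\rangle$ and bounds it via Cauchy--Schwarz by $e^{-\lambda t}\norm{\xi}^2$, whereas your difference quotient of $\norm{T_t\xi}^2$ works equally well.
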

\begin{proof}
    (i)$\implies$(ii): If $\xi\in \dom(A)$, then
    \begin{equation*}
        \operatorname{Re}\langle\xi,A\xi\rangle=\lim_{t\to 0}\frac 1 t(\operatorname{Re}\langle \xi,T_t\xi\rangle-\norm{\xi}^2)\leq\liminf_{t\to 0}\frac {e^{-\lambda t}-1}{t}\norm{\xi}^2=-\lambda\norm{\xi}^2.
    \end{equation*}
    (ii)$\implies$(i): If $\xi\in\dom(A)$, then
    \begin{equation*}
        \frac{d}{dt}\norm{T_t\xi}^2=2\operatorname{Re}\langle T_t\xi,A T_t\xi\rangle\leq -2\lambda\norm{T_t\xi}^2
    \end{equation*}
    Thus $\norm{T_t\xi}^2\leq e^{-2\lambda t}\norm{\xi}^2$ by Grönwall's lemma. Since $\dom(A)\subset H$ is dense, we obtain $\norm{T_t}\leq e^{-\lambda t}$.
\end{proof}

Property (ii) from the previous lemma is essentially an upper bound on the spectrum of the real part of $A$. However, since $A+A^\ast$ might fail to be densely defined, it is in general not clear how to define the real part of $A$ (this problem does not occur for Gaussian quantum Markov semigroups at least when $f=1$ and $f=\sqrt\cdot$, as shown in \cite[Lemma 4.12]{FL25}).

Finally, the claimed relation between the GNS spectral gap and the $f$-spectral gap for $f\in\OM_1$ is a direct consequence of \Cref{cor:interpolation}.

\begin{corollary}\label{cor:comparison_spectral_gap}
    If $(\Phi_t)_{t\geq 0}$ has GNS spectral gap $\lambda>0$, then it has $f$-spectral gap at least $\lambda$ for all $f\in \OM_1$. In particular, it has KMS spectral gap at least $\lambda$.
\end{corollary}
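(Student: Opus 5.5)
The plan is to apply \Cref{cor:interpolation} to each map $\Phi_t$ separately, with contraction factor $\omega=e^{-\mu t}$, for every $\mu$ for which the GNS decay estimate holds.

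Fix $f\in\OM_1$ and any $\mu>0$ for which \eqref{eq:spectral_gap} holds with $f=1$, i.e.\ $\norm{\Phi_t(x)}_{\mathrm{GNS}}\leq e^{-\mu t}\norm{x}_{\mathrm{GNS}}$ for all $x\in\ker\phi$ and $t\geq 0$. By definition of the GNS spectral gap as a supremum, such $\mu$ can be chosen arbitrarily close to $\lambda$ from below; if $\lambda$ itself is attained, we may take $\mu=\lambda$.

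For each fixed $t\geq 0$, the map $\Phi_t$ is unital completely positive with $\phi\circ\Phi_t=\phi$. Set $\omega=e^{-\mu t}\in(0,1]$. Then \Cref{cor:interpolation} applies directly and gives
\begin{equation*}
\norm{\Phi_t(x)}_f\leq e^{-\mu t}\norm{x}_f,\qquad x\in\ker\phi .
\end{equation*}
Since $t$ was arbitrary, \eqref{eq:spectral_gap} holds for $f$ with rate $\mu$. Hence $(\Phi_t)$ has $f$-spectral gap, and its $f$-spectral gap is at least $\mu$. Taking the supremum over all admissible $\mu$ shows that the $f$-spectral gap is at least $\lambda$. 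The KMS statement is the special case $f=\sqrt\cdot$, which lies in $\OM_1$.

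There is no real obstacle here: all the analytic work is contained in \Cref{thm:exact_interpolation}. The only points needing care are bookkeeping ones. First, the decay rate is a supremum and need not be attained, which is why I work with an arbitrary admissible $\mu$ rather than with $\lambda$ itself. Second, \Cref{cor:interpolation} requires $\omega\leq 1$, which holds because $\mu>0$ and $t\geq 0$. Third, the estimate is stated on $\ker\phi\subset M$ rather than on the completion $H_f$, matching exactly the definition \eqref{eq:spectral_gap}.
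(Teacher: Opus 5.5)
Your proof is correct and is exactly the argument the paper intends: it states the corollary as a direct consequence of \Cref{cor:interpolation}, applied to each $\Phi_t$ with $\omega=e^{-\mu t}$. Your handling of the supremum is fine. In fact the supremum is always attained, because the decay estimate passes to the limit along an increasing sequence of admissible rates, so you could take $\mu=\lambda$ directly.
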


\begin{remark}
    The computations from \cite[Section 7.1]{FPSU25} show that in general there is no inequality in the reverse direction, even up to multiplicative constants. In fact, it can happen that a quantum Markov semigroup has KMS spectral gap, but no GNS spectral gap.
\end{remark}

\begin{remark}
    The way the spectral gap is defined in \cite{FPSU25}, it implies that the ground state is non-degenerate, that is, multiples of $1$ are the only fixed points of $(\Phi_t)_{t\geq 0}$. More generally, one can also study the spectral gap in the case of a degenerate ground state. If $N=\{x\in M\mid \Phi_t(x)=x\text{ for all }t\geq 0\}$, then $N$ is a von Neumann subalgebra of $M$ and there exists a $\phi$-preserving conditional expectation $E\colon M\to N$ by \cite[Theorem A]{Wat79}. Then we may say that $(\Phi_t)_{t\geq 0}$ has $f$-spectral gap at least $\lambda>0$ for $f\in \OM_1$ if
    \begin{equation*}
        \norm{\Phi_t(x)}_f\leq e^{-\lambda t}\norm{x}_f,\quad x\in \ker E,\,t\geq 0.
    \end{equation*}
    In the light of \Cref{thm:exact_interpolation} (with $P=\id_M-E$), \Cref{cor:comparison_spectral_gap} remains valid in the case of a degenerate ground state.
\end{remark}

\begin{remark}
    In general, it seems difficult to determine the relation between the $f_1$-spectral gap and the $f_2$-spectral gap for $f_1,f_2\in\OM_1$. From the comparison between the GNS spectral gap and $f$-spectral, we can destillate the following two general facts: First, if $f\in\OM_1$ and $\tilde f$ is its transpose defined by $\tilde f(t)=tf(t^{-1})$ for $t\in\IR_+$, then
    \begin{equation*}
        \norm{x^\ast}_f=\norm{f(\Delta_\phi)^{1/2}x^\ast\Omega_\phi}=\norm{\Delta_\phi^{1/2}f(\Delta_\phi^{-1})^{1/2}x\Omega_\phi}=\norm{x}_{\tilde f},\quad x\in M.
    \end{equation*}
    As ucp maps are $\ast$-preserving, it follows that the $f$-spectral gap and the $\tilde f$-spectral gap coincide. In particular, if $f_\alpha(t)=t^\alpha$ for $\alpha\in [0,1]$, then $\tilde f(t)=t^{1-\alpha}$, and the $f_\alpha$-spectral gap coincides with the $f_{1-\alpha}$-spectral gap.

    Second, if $0\leq \alpha\leq \beta\leq 1/2$, then $\tilde f_\alpha(\Delta_\phi)=\Delta_\phi^{1-2\alpha}f(\Delta_\phi)$ and $f_\beta(\Delta_\phi)=(\Delta_\phi^{1-2\alpha})^{\frac{\beta-\alpha}{1-2\alpha}} f_\alpha(\Delta_\phi)$. Since $t\mapsto t^{\frac{\beta-\alpha}{1-2\alpha}}$ belongs to $\OM_1$, an analogous interpolation argument to that of \Cref{thm:exact_interpolation} shows that the $f_\alpha$-spectral gap is less than or equal to the $f_\beta$-spectral gap.

    Therefore, as a function of $\alpha$, the $f_\alpha$-spectral gap is symmetric around $1/2$, monotonically increasing on $[0,1/2]$ and monotonically decreasing on $[1/2,1]$.
\end{remark}

\bibliographystyle{alpha}
\bibliography{ref}

\end{document}